\documentclass[aps,onecolumn,showpacs,superscriptaddress,floatfix]{revtex4-2}

\usepackage{amsmath,amssymb,amsthm}
\usepackage{mathtools}
\usepackage{booktabs}
\usepackage{graphicx}
\usepackage[colorlinks=true,linkcolor=blue!70!black,citecolor=blue!70!black, urlcolor=blue!70!black]{hyperref}
\usepackage{microtype}
\usepackage{enumitem}
\usepackage{xcolor}

\newcommand{\K}{\mathrm{K}}
\newcommand{\RC}{\mathrm{RC}}

\newcommand{\Len}{\mathrm{L}}

\newcommand{\aeq}{\overset{+}{=}}
\newcommand{\ale}{\overset{+}{<}}
\newcommand{\aleq}{\overset{+}{\leq}}
\newcommand{\age}{\overset{+}{>}}
\newcommand{\ageq}{\overset{+}{\geq}}

\newtheorem{theorem}{Theorem}

\newtheorem{definition}[theorem]{Definition}

\begin{document}

\title{\texorpdfstring{The Informational Cost of Structure:\\
       Representational Complexity in Networked Dynamical Systems}{The Informational Cost of Structure:
       Representational Complexity in Networked Dynamical Systems}} 

\author{Cyril Rommens}
\author{Pietro Traversa}%
\affiliation{%
 Institute for Biocomputation and Physics of Complex Systems (BIFI), Universidad de Zaragoza, 50018 Zaragoza, Spain\\
}%
\affiliation{
  Department of Theoretical Physics, University of Zaragoza, 50009 Zaragoza, Spain
}

\author{Guilherme Ferraz de Arruda}%
\affiliation{Instituto de F\'{\i}sica Gleb Wataghin, Universidade de Campinas (UNICAMP), Campinas, Brazil.}

\author{Yamir Moreno}%
\email{yamir@unizar.es}
\affiliation{%
 Institute for Biocomputation and Physics of Complex Systems (BIFI), Universidad de Zaragoza, 50018 Zaragoza, Spain\\
}%
\affiliation{
  Department of Theoretical Physics, University of Zaragoza, 50009 Zaragoza, Spain
}%

\begin{abstract}
How much information is required to represent a dynamical system in terms of an interaction structure and an evolution rule? We address this question using algorithmic information theory. We introduce Representational Complexity, the excess description length of a structure-plus-rule model relative to the shortest possible description of the dynamics itself. This intrinsic description defines a universal lower bound: no exact structural representation can be more concise. If arbitrary rules are allowed, graphs, hypergraphs, and other formalisms can all reach this bound by shifting information between structure and dynamics, so expressiveness alone cannot distinguish them. Meaningful differences arise only when scientific modeling restricts the admissible structures and rules. Within this setting, we identify conditions under which graph and hypergraph descriptions are informationally equivalent, and show how graph-preferred, hypergraph-preferred, and mixed regimes can emerge when those conditions are relaxed. Because Kolmogorov complexity is not computable, we complement the formal results with explicit description-length estimates. Our framework reframes the choice of network representation as a question of informational cost and mechanistic transparency rather than universal expressive power.
\end{abstract}

\maketitle

\section{Introduction}

Structural representations pervade complex-systems science~\cite{newman2018networks, barabasi2013network, Strogatz2001, Costa2011, Boccaletti2006, Vespignani2018}. The choice of structural language --- graph, hypergraph, simplicial complex, etc.~\cite{Torres2021,Battiston2020,BickGross2023,Lambiotte2019, battiston2021physics,Butts2009,Kivela2014,HolmeSaramaki2012}\ --- is often presented as a question of \emph{expressiveness}: can the language represent the interactions of interest?  A recent preprint by Peixoto et al.~\cite{Peixoto2026} argues that graphs are maximally expressive, subsuming hypergraph formulations as constrained special cases, a claim that has been debated on both technical and conceptual grounds~\cite{LaRockLambiotte2025,Lacasa2026,Llabres2026}.

We argue that expressiveness is not the right operational criterion. Any finite Boolean dynamical map is fully specified by its truth table, a finite binary string, and can therefore be encoded exactly within essentially any structural language by absorbing the missing information into the dynamical rule. Expressiveness alone thus cannot motivate a choice between graphs and hypergraphs. A more informative criterion is the \emph{informational cost} of a representation~\cite{rissanen1978modeling, barron1998minimum, Grunwald2007}: given a specific dynamical map $F$, a structural language, and a class of admissible dynamical rules, how much description length does the resulting representation minimally require?

Informational cost is, however, only part of the story. Scientific modeling is not merely the reproduction of input--output relations at minimal bit cost; it is an attempt to identify the relevant degrees of freedom, symmetries, and interaction mechanisms of a system. A fully flexible representation may achieve a short description while concealing the physical hypothesis inside an opaque effective rule. A more constrained representation can be scientifically preferable precisely because it makes that hypothesis explicit and testable. From this viewpoint, hypergraphs are not primarily a claim of greater representational power; they are modeling commitments that declare group-level interactions to be primitive objects~\cite{ferraz2024contagion,stOnge2025defining,abiad2026roadmap}.  The relevant question is therefore not whether a graph with sufficiently flexible node dynamics can emulate a hypergraph model~\cite{Llabres2026}, but whether doing so preserves parsimony, interpretability, identifiability, and mechanistic transparency~\cite{Akaike1974,BellmanAstrom1970, Machamer2000}.

Informational cost and scientific adequacy are complementary lenses: a good model should be both efficiently representable within its chosen language and honest about the mechanisms it posits. Kolmogorov complexity~\cite{LiVitanyi2019} provides a natural language for formalizing the cost side of this distinction. We first identify the theoretical floor: the fully equation-based representation, in which no structural skeleton is provided, and all information is absorbed into the rule, achieves description length $\K(F)$ up to an additive constant. No exact representation can do better (Theorem~\ref{thm:floor}). This result is not meant to advocate equation-based descriptions as scientific models. Rather, it establishes the irreducible algorithmic content of the dynamical map and clarifies that any explicit structure is an additional modeling commitment.

We then distinguish between unrestricted and restricted representations. In an unrestricted representation, the dynamical rule may be any computable program; hence, information absent from the skeleton can always be hidden inside the rule. Such representations capture theoretical expressiveness but are usually poor scientific models because they sacrifice interpretability and generalizability. In a restricted representation, by contrast, both the structure and the dynamics must belong to specified admissible classes. It is in this restricted regime that representations such as graph and hypergraph models can differ meaningfully in informational cost.

Within the restricted setting, we compare graph and hypergraph representations through the relative cost $\Delta\RC=\RC(G,D_G)-\RC(H,D_H)$, whose sign defines graph-preferred, equivalence, and hypergraph-preferred regimes. We prove that the two languages are informationally equivalent up to additive constants \emph{conditionally}: equivalence requires that the higher-order structure be recoverable from the graph through its dynamical rule, and that the graph be recoverable from the hypergraph through a fixed computable projection (Section~\ref{sec:equivalence}). When these conditions fail, the balance of description length can shift. Thus, the equivalence statement is not intended as a generic claim about graph and hypergraph representations, but as a sufficient and interpretable set of conditions under which they carry the same total algorithmic information. Because exact Kolmogorov complexities are incomputable, we explore graph-preferred (Section~\ref{sec:graph_pref}) and hypergraph-preferred (Section~\ref{sec:hyper_pref}) examples through operational description-length estimates based on counting arguments. These estimates build intuition for how informational cost differences can arise, but they do not constitute proofs that the corresponding regimes exist at the level of Kolmogorov complexity; throughout, we keep what is proven separate from what is merely suggested.

Taken together, these results reframe the graph--hypergraph debate. The question is not which language is universally more expressive, but how a given modeling commitment distributes information between structure and dynamics, at what informational cost, and whether that cost is repaid in interpretability, mechanistic transparency, or empirical alignment. Our contributions are as follows:
\begin{itemize}
  \item We introduce Representational Complexity [Eq.~\eqref{eq:rc_def}], an information-theoretic measure of the excess description length incurred when a dynamical map is represented within a chosen structural and dynamical language.
  \item We prove that no exact representation can describe a map below its Kolmogorov complexity $\K(F)$, so that the equation-based representation is a representation-independent floor and any explicit structure is a non-compressive modeling commitment up to additive constants (Theorem~\ref{thm:floor}).
  \item We show that in the unrestricted setting all structural languages are informationally interchangeable, so that informational cost alone cannot motivate a choice between them (Section~\ref{sec:encoding}).
  \item We establish that, in the restricted setting, graph and hypergraph representations are informationally equivalent \emph{conditionally} on explicit recoverability assumptions relating their structural and dynamical components (Section~\ref{sec:equivalence}).
  \item We introduce a computable description-length surrogate for $\RC$ and use it to compare representations across explicit examples, illustrating how graph-preferred and hypergraph-preferred cases may arise in practice (Section~\ref{sec:comparison} onward).
  \item We argue that the choice of structural language should be governed jointly by informational cost and scientific adequacy.
\end{itemize}

\section{Preliminaries and Definitions}
\label{sec:notation}

\subsection{Algorithmic information}

Throughout, all objects are finite binary strings, and we fix a universal prefix-free Turing machine $\mathcal{U}$. We write $\K(x)$ for the prefix-free (Kolmogorov) complexity of $x$, i.e.\ the length of a shortest self-delimiting program that makes $\mathcal{U}$ output $x$~\cite{LiVitanyi2019}. We write $\K(x\mid y^{*})$ for the conditional complexity of $x$ given a shortest program $y^{*}$ for $y$, and $\K(x,y):=\K(\langle x,y\rangle)$ for joint complexity, where $\langle\cdot,\cdot\rangle$ is a fixed self-delimiting pairing of strings.

All $O(1)$ terms are absolute constants depending only on $\mathcal{U}$. By the invariance theorem, replacing $\mathcal{U}$ by another universal machine changes every complexity by at most such a constant; hence every statement below, which holds up to $O(1)$, is machine-independent. We write the additive-constant versions of the relational symbols with a superscript $+$: $f\aeq g$ means $|f-g|=O(1)$, $f\aleq g$ means $f\le g+O(1)$, and likewise for $\ageq,\ale,\age$. The strict symbols $\ale$ and $\age$ assert inequalities whose margin exceeds any fixed additive constant; when used in asymptotic settings, this should be understood as a non-constant separation in description length, typically scaling with system size $N$.

Conditioning on a shortest program $y^{*}$ rather than on $y$ itself is precisely what makes the prefix-free \emph{chain rule}
\begin{equation}
  \K(x,y) \aeq \K(x)+\K(y\mid x^{*})
  \label{eq:chain}
\end{equation}
hold up to an additive constant rather than an additive logarithmic term. Applying Eq.~\eqref{eq:chain} in both orders yields the \emph{symmetry of
information}
\begin{equation}
  \K(x)+\K(y\mid x^{*}) \;\aeq\; \K(y)+\K(x\mid y^{*}) \;\aeq\; \K(x,y).
  \label{eq:soi}
\end{equation}

\subsection{Representational Complexity}
\label{sec:rc}

In complex-systems modeling, one typically describes a dynamical system $F:\{0,1\}^{N}\!\to\{0,1\}^{N}$ by splitting its description into a structural part and a dynamical part: a skeleton $S$ that specifies which units interact, and a rule $D$ that specifies how. Since the domain is finite, we identify $F$ with its truth table, a binary string of length $N2^{N}$, so that $\K(F)$ is the complexity of a finite string.

\begin{definition}[Representation]
\label{def:rep}
A \emph{representation} $R$ of $F$ is a pair $(S,D)$, where $S$ is a finite string encoding an interaction structure and $D$ is a finite string encoding a partial computable function satisfying $D(S)=F$. This includes the requirement that the computation encoded by $D$ halts on the input $S$; throughout, the evaluator that interprets a string as a partial computable function is fixed once and for all. The pair is encoded as a single finite string through the fixed pairing $\langle S,D\rangle$.
\end{definition}

\begin{definition}[Representational Complexity]
\label{def:rc}
The \emph{Representational Complexity} of an exact representation $R=(S,D)$ of $F$ is
\begin{equation}
  \RC(S,D) := \K(S,D)-\K(F).
  \label{eq:rc_def}
\end{equation}
Here $\K(F)=\K(D(S))$ is the complexity of the \emph{dynamical map} produced by the representation, not of the operator $D$: the operator may carry internal information that is invisible in the output map $F$. In particular, $\K(D)$ and $\K(F)$ can differ substantially: a highly structured operator applied to a simple skeleton may produce a complex map, or vice versa.
\end{definition}

Because $F$ is computed from $(S,D)$ by a fixed evaluator that decodes the pair, runs $D$ on $S$, and outputs the result,
\begin{equation} \label{eq:lower_floor}
  \K(F)\aleq\K(S,D),
\end{equation}
so $\RC(S,D)\ageq 0$: no representation can describe $F$ in fewer bits than its intrinsic algorithmic content. This construction instantiates, within the setting of networked dynamical systems, the two-part code framework of algorithmic statistics~\cite{GacsTrompVitanyi2001,VereshchaginVitanyi2004}: the structure $S$ plays the role of the model, and the rule $D$ plays the role of the data-given-model code.

A second decomposition exposes how the cost is split between structure and dynamics. The chain rule~\eqref{eq:chain} gives
\begin{equation}
  \K(S,D)\aeq\K(S)+\K(D\mid S^{*}),
  \label{eq:chain_SD}
\end{equation}
and writing the algorithmic mutual information $I(D:S):=\K(D)-\K(D\mid S^{*})$ turns Eq.~\eqref{eq:rc_def} into
\begin{equation}
  \RC(S,D) \;\aeq\; \underbrace{\bigl[\K(S)+\K(D)-\K(F)\bigr]}_{\text{independent over-cost}} \;-\;I(D:S).
  \label{eq:rc_expanded}
\end{equation}
The bracketed term is what one would pay to describe $S$ and $D$ separately, above the floor $\K(F)$, while $I(D:S)$ is the saving obtained from their algorithmic overlap. Intuitively, $I(D:S)$ measures how much knowing $S$ reduces the description length of $D$: it is large when the skeleton and the rule share algorithmic information (so that one can be inferred, at least partially, from the other), and zero when they are algorithmically independent. Holding the component complexities $\K(S),\K(D),\K(F)$ fixed, $\RC$ therefore decreases monotonically as the dependence between structure and rule increases: it is largest when $S$ and $D$ are algorithmically independent ($I(D:S)\aeq0$), so that the structure resolves none of the uncertainty in the rule and the two parts must be described separately, and smallest when one part nearly determines the other.

\section{Representational Choice and Informational Cost}

We now apply these tools to structural choice. We first show that explicit structure can never reduce the description length of a system below the floor $\K(F)$, so that imposing structure is always a non-negative informational commitment (Section~\ref{sec:K_floor}). We then distinguish unrestricted from restricted representations, showing that a meaningful comparison of informational cost is possible only once the admissible structures and dynamical rules are constrained (Section~\ref{sec:encoding}). Within this restricted setting, we turn to the consequences of committing to a graph or a hypergraph encoding in particular (Section~\ref{sec:comparison}).

\subsection{Imposing Structure Never Decreases Informational Cost}
\label{sec:K_floor}
We first establish the baseline against which structural commitments are measured: the representation that supplies no structure and absorbs everything into the dynamical rule, which attains the lowest possible description length.

\begin{theorem}[Equation-based floor]
\label{thm:floor}
Let $D_{\mathrm{eq}}$ denote a constant function with $D_{\mathrm{eq}}(\varepsilon)=F$, where $\varepsilon$ is the empty string. Then:
\begin{enumerate}[label=(\roman*)]
  \item $\K(\varepsilon,D_{\mathrm{eq}}) \aeq \K(F)$ and $\RC(\varepsilon,D_{\mathrm{eq}}) \aeq 0$.
  \item For any exact representation $(S,D)$ of $F$, $\K(F) \aleq \K(S,D)$, so $\RC(S,D)\ageq 0$.
  \item If $\K(S,D) \aeq \K(F)$, then $\K(S\mid F^*) \aeq 0$: a structure attaining the floor is itself computable from the dynamical map up to additive constants.
\end{enumerate}
\end{theorem}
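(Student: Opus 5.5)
The plan is to prove the three parts in order, using only the chain rule~\eqref{eq:chain}, the symmetry of information~\eqref{eq:soi}, and the evaluator bound~\eqref{eq:lower_floor}.

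For (i), the upper bound comes from an explicit encoding. Fix once and for all a machine that, given a self-delimiting description of a string $x$, produces a code for the constant function returning $x$, paired with $\varepsilon$. Composing this with a shortest program $F^*$ gives a program for $\langle \varepsilon, D_{\mathrm{eq}}\rangle$ of length $\K(F)+O(1)$. One subtlety must be handled here: $D_{\mathrm{eq}}$ should be understood as the canonical code of the constant function obtained computably from $F$, namely ``print the literal $F$''. With that reading, $\K(\varepsilon,D_{\mathrm{eq}})\aleq\K(F)$. The reverse inequality is exactly part (ii) applied to the representation $(\varepsilon,D_{\mathrm{eq}})$. Combining the two gives $\K(\varepsilon,D_{\mathrm{eq}})\aeq\K(F)$, and hence $\RC(\varepsilon,D_{\mathrm{eq}})\aeq0$. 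So I will prove (ii) first and then use it here.

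For (ii), I will use the fixed evaluator $E$ from Definition~\ref{def:rep}. Given $\langle S,D\rangle$, $E$ decodes the pair, runs $D$ on $S$, and outputs the resulting truth table. Since $(S,D)$ is exact, this computation halts with output $F$. Prefixing a shortest program for $\langle S,D\rangle$ with the constant-length code of $E$ yields a program for $F$. Therefore $\K(F)\aleq\K(S,D)$, which is Eq.~\eqref{eq:lower_floor}, and $\RC(S,D)\ageq0$ follows immediately from Definition~\ref{def:rc}.

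Part (iii) is the only step needing more than a one-line argument. I will proceed as follows.
\begin{itemize}
\item Since $F$ is computable from $\langle S,D\rangle$ by $E$, we have $\K(F\mid\langle S,D\rangle^*)\aeq0$.
\item By the chain rule, $\K(S,D,F)\aeq\K(S,D)+\K(F\mid\langle S,D\rangle^*)\aeq\K(S,D)$.
\item Applying the chain rule in the other order gives $\K(S,D,F)\aeq\K(F)+\K(\langle S,D\rangle\mid F^*)$.
\item Equating the two expressions and using the hypothesis $\K(S,D)\aeq\K(F)$ gives $\K(\langle S,D\rangle\mid F^*)\aeq0$.
\item Finally, $S$ is computable from $\langle S,D\rangle$ by a fixed projection, so $\K(S\mid F^*)\aleq\K(\langle S,D\rangle\mid F^*)\aeq0$.
\end{itemize}

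The main obstacle is keeping the constants genuinely absolute. The steps at risk are the triple-joint chain rule and the conditional projection. For these I will rely on the facts that the pairing is fixed and self-delimiting and that conditioning on shortest programs is exactly what Eq.~\eqref{eq:chain} is stated for. I will also be explicit about how the hypothesis is read. ``$\K(S,D)\aeq\K(F)$'' is taken with a fixed constant $c$, and the conclusion is that $\K(S\mid F^*)\le c+O(1)$. In other words, the conclusion's constant depends on the hypothesis's constant plus machine-dependent terms.
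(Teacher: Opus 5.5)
Your proof is correct and matches the paper's argument for (i) and (ii) essentially step for step. The paper reads $D_{\mathrm{eq}}$ as a constant-function wrapper around $F^*$ rather than around the literal $F$, but either choice gives $\K(\varepsilon,D_{\mathrm{eq}})\aleq\K(F)$.

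For (iii), the paper takes a slightly shorter path. It bounds $\K(F,S)\aleq\K(S,D)\aeq\K(F)$ via the computable map $(S,D)\mapsto(F,S)$ and applies the chain rule once. Your two-sided chain-rule argument on $(\langle S,D\rangle,F)$, followed by projection, costs one extra step. In exchange it yields the marginally stronger intermediate fact $\K(\langle S,D\rangle\mid F^*)\aeq0$, i.e.\ the rule $D$ is also recoverable from $F^*$.
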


\begin{proof}
\textit{(i).} Since $\varepsilon$ has constant complexity and $D_{\mathrm{eq}}$ is obtained by wrapping a shortest program for $F$ into a constant function, $\K(\varepsilon,D_{\mathrm{eq}}) \aleq \K(F)$. The reverse inequality follows from Eq.~\eqref{eq:lower_floor}, because evaluating $D_{\mathrm{eq}}$ on $\varepsilon$ gives $F$.

\textit{(ii).} Given a shortest program for the pair $\langle S,D\rangle$, a fixed wrapper program decodes the pair, evaluates $D$ on $S$, and outputs $F$. Therefore there exists a program for $F$ of length $\K(S,D)+O(1)$, proving $\K(F) \aleq \K(S,D)$.

\textit{(iii).} The map $(S,D)\mapsto(F,S)$ is computable: evaluate $D(S)$ to obtain $F$ and output $(F,S)$. Hence $\K(F,S) \aleq \K(S,D) \aeq \K(F)$, and since $\K(F,S)\ageq\K(F)$ trivially, $\K(F,S)\aeq\K(F)$. By the chain rule~\eqref{eq:chain}, $\K(F,S) \aeq \K(F)+\K(S\mid F^*)$, so $\K(S\mid F^*) \aeq 0$.
\end{proof}

Theorem~\ref{thm:floor} should not be read as saying that every non-empty structure is useless. It says something more precise: no structure can reduce the total description length below $\K(F)$, and any zero-RC structure must be algorithmically recoverable from $F$. Consequently, a structure that is not recoverable from the dynamical map necessarily incurs a positive representational cost, up to additive constants. Explicit structure is justified not by beating $\K(F)$, but by providing interpretability, mechanistic organization, or empirical observability.

\subsection{Restricted Structural and Dynamical Representations}
\label{sec:encoding}

Until now, we have treated representations as descriptions of a networked dynamical system that divide their encoding between a structural part $S$ and a dynamical part $D$. In practice, the information cannot be distributed freely between these two components: the choice of a modeling framework imposes specific constraints on what counts as an admissible structure and an admissible rule. We discuss these constraints and their consequences below, beginning with the baseline case in which no constraints are imposed, before turning to restricted model classes.

\subsubsection{The Unrestricted Setting}

\begin{definition}[Unrestricted representation]
\label{def:unrestricted}
A representation $R = (S,D)$ of $F$ is \emph{unrestricted} if $S$ and $D$ are subject to no constraint beyond computability and $D(S)=F$; in particular, no constraint is imposed on how information is distributed between $S$ and~$D$.
\end{definition}

If, in this setting, the structural part $S$ instantiates a specific, familiar structure, like a graph or hypergraph, then this representation is theoretically optimal, attaining the lowest description length that any representation of $F$ can achieve. This optimum is not unique. We have already seen that the fully equation-based representation of Theorem~\ref{thm:floor} attains this floor. Actually, any representation satisfying $\K(S,D) \aeq \K(F)$ attains the same floor. In this sense, Theorem~\ref{thm:floor}(iii) acts as an interchangeability requirement for unrestricted representations. The unrestricted setting therefore serves as a theoretical reference point, establishing $\K(F)$ as the irreducible floor, but representations of this kind generally lack mechanistic interpretability, explanatory structure, and compact causal organization. The meaningful comparison of informational cost begins once the admissible dynamical language is restricted by modeling assumptions.

\subsubsection{The Restricted Setting}

\begin{definition}[Restricted representation]
\label{def:restricted}
A representation $R=(S,D)$ of $F$ is \emph{restricted} if $S$ is constrained to belong to a structural language $\mathcal{L}$ (i.e.\ $S\in\mathcal{L}$) and $D$ is constrained to belong to an admissible class of computable dynamical rules $\mathcal{C}$ (i.e.\ $D\in\mathcal{C}$). The constraints $(\mathcal{L},\mathcal{C})$ fix how information may be distributed between $S$ and $D$. We denote such a representation by $R_\mathcal{M} = (S_\mathcal{L}, D_\mathcal{C})$, where $\mathcal{M}=(\mathcal{L},\mathcal{C})$ is the model class.
\end{definition}

Throughout we assume the model class is \emph{realizable} for the system at hand, i.e.\ that at least one admissible pair represents $F$ exactly: $\{(S,D): S\in\mathcal{L},\,D\in\mathcal{C},\,D(S)=F\}\neq\varnothing$. A restricted representation is \emph{ideal}, denoted $R_{\mathcal{M}}^{\mathrm{ideal}}$, when it attains the smallest Kolmogorov complexity within the model class,
\begin{equation}
  \K(R_{\mathcal{M}}^{\mathrm{\text{ideal}}}) \aeq \min_{\substack{S\in\mathcal{L},\,D\in\mathcal{C}\\D(S)=F}} \K(S,D) \aeq \min_{\substack{S\in\mathcal{L},\,D\in\mathcal{C}\\D(S)=F}} \bigl[\K(S)+\K(D\mid S^*)\bigr].
  \label{eq:restricted_R_ideal}
\end{equation}

The restriction $S\in\mathcal{L}$, $D\in\mathcal{C}$ is a constraint on the admissible representations, not on the reference universal machine: all complexities in Eq.~\eqref{eq:restricted_R_ideal} are ordinary prefix Kolmogorov complexities evaluated after restricting the feasible set to $S\in\mathcal{L}$, $D\in\mathcal{C}$. This should not be confused with MDL-style restricted code lengths, which are introduced later as operational surrogates. The minimum in Eq.~\eqref{eq:restricted_R_ideal} is a set-theoretic minimum over finite strings. It does not require that membership in $\mathcal{L}$ or $\mathcal{C}$ be decidable, nor that the minimizing pair be computable by the modeler. Realizability guarantees that the feasible set is non-empty, and since prefix complexities are non-negative integers, a minimum exists. This non-constructive definition is useful as an ideal benchmark; Section~\ref{sec:approx} introduces explicit code lengths precisely because the ideal minimizer is generally not accessible in practice.

Unlike in the unrestricted setting, the Representational Complexity of an ideal restricted representation need not equal the floor: in general $\RC(R_{\mathcal{M}}^{\mathrm{ideal}})\age 0$, since the constraints $(\mathcal{L},\mathcal{C})$ may forbid the redistribution of information that would be required to reach $\K(F)$. Restricted representations are therefore not informationally interchangeable in general, and comparing their informational cost becomes meaningful. The exception is when two restricted model classes are related by a bijection computable by a single fixed program --- one whose description does not grow with the system. The two classes then have equal Kolmogorov complexity up to an additive constant, placing them in the equivalence regime discussed below. Not every bijection admits such a fixed encoding, however: some require a program whose length scales with the system size, in which case the two classes need not be informationally equivalent. The mere existence of a bijection is thus not sufficient for equivalence; what matters is whether the translation between the classes is itself algorithmically cheap. This restricted setting is where the graph--hypergraph comparison of the next section becomes meaningful.

\subsection{Graph and Hypergraph Representations in the Restricted Setting}
\label{sec:comparison}

We now compare the informational cost of the best achievable graph and hypergraph representations of the same map $F$. Let $\mathcal{M}_G=(\mathcal{L}_G,\mathcal{C}_G)$ and $\mathcal{M}_H=(\mathcal{L}_H,\mathcal{C}_H)$ be the graph and hypergraph model classes, and let $R_G^{\mathrm{ideal}}=(G,D_G)$ and $R_H^{\mathrm{ideal}}=(H,D_H)$ be their respective ideal representations in the sense of Eq.~\eqref{eq:restricted_R_ideal}. 
Here $G$ and $H$ denote particular structural instances selected by the ideal representations, while $\mathcal{L}_G$ and $\mathcal{L}_H$ denote the corresponding structural languages.

The natural object of comparison is the relative representational cost
\begin{align}
\Delta\RC
&= \RC(G,D_G)-\RC(H,D_H)
\nonumber\\
&\aeq \K(G)+\K(D_G\mid G^*) - \K(H)-\K(D_H\mid H^*).
\label{eq:deltaRC_sym}
\end{align}
Because both representations encode the same map, the two floor terms $-\K(F)$ in $\RC(G,D_G)$ and $\RC(H,D_H)$ cancel, and the incomputable quantity $\K(F)$ drops out entirely; the chain rule~\eqref{eq:chain_SD} then splits each joint complexity into a structural and a dynamical part. The comparison thus reduces to a difference of two finite description lengths and depends on two contributions: the complexity of the structural encoding and the complexity of the dynamical rule conditioned on that structure.

The sign of $\Delta\RC$ distinguishes three regimes:
\begin{align}
\Delta\RC &\ale 0 && \text{graph preferred},
\nonumber\\
\Delta\RC &\aeq 0 && \text{informational equivalence},
\nonumber\\
\Delta\RC &\age 0 && \text{hypergraph preferred}.
\nonumber
\end{align}
Note that because all Kolmogorov-complexity statements are defined only up to additive constants, a preference should be interpreted as meaningful only when the separation is larger than this ambiguity; in scaling arguments, this corresponds to a gap that grows with system size. In asymptotic comparisons, we therefore assume that the representation languages and the translators between them are specified uniformly: the decoding conventions, pairing map, evaluator, projection maps, and admissible rule catalogues are fixed finite objects, independent of $N$. Under this standard uniformity convention, all additive constants hidden in $\aeq,\aleq,\ageq$ are bounded independently of system size. Hence, if a regime-defining difference is bounded below by a function $g(N)-O(1)$ with $g(N)\to\infty$, its sign is invariant for all sufficiently large $N$. Conversely, constant-size gaps should not be interpreted as meaningful strict preferences.

\subsubsection{Equivalence regime  \texorpdfstring{($\Delta\RC\aeq0$)}{∆RC = 0}}
\label{sec:equivalence}

By Eq.~\eqref{eq:deltaRC_sym}, graph and hypergraph representations lie in the equivalence regime precisely when
\begin{equation}
\K(G)+\K(D_G\mid G^*) \aeq \K(H)+\K(D_H\mid H^*).
\label{eq:equiv_condition}
\end{equation}
In this regime, neither representation enjoys a compressive advantage: both carry the same algorithmic information, differing only in how that information is distributed between structure and dynamics. Equation~\eqref{eq:equiv_condition} is the exact equivalence condition. We next give a natural pair of sufficient modeling assumptions under which it holds.

Consider a system with irreducible group interactions, such as higher-order contagion~\cite{Battiston2020, FerrazdeArruda2024}. Such a system must specify which subsets of variables are allowed to interact, and the collection of admissible interaction groups naturally defines a hypergraph $H$. A graph-based description, lacking these groups as primitive objects, must instead reconstruct the higher-order structure through its dynamical rule.

\paragraph*{Assumption 1.}(Sufficient condition for equivalence.)
The graph-based rule carries exactly the information needed to recover the hypergraph and then implement the hypergraph dynamics in an information-optimal way:
\begin{equation}
\K(D_G\mid G^*) \aeq \K(H\mid G^*) + \K(D_H\mid H^*).
\label{eq:equiv_assumption1}
\end{equation}
Substituting Eq.~\eqref{eq:equiv_assumption1} into the left-hand side of Eq.~\eqref{eq:equiv_condition} gives
\begin{equation}
\K(G)+\K(D_G\mid G^*) \aeq \K(G)+\K(H\mid G^*)+\K(D_H\mid H^*).
\label{eq:equiv_substituted}
\end{equation}

\paragraph*{Assumption 2.}(Sufficient condition for equivalence.)
The graph can be obtained from the hypergraph through a fixed computable projection rule, so that
\begin{equation}
\K(G\mid H^*) \aeq 0.
\label{eq:equiv_assumption2}
\end{equation}
By symmetry of information~\eqref{eq:soi},
\begin{equation}
\K(G)+\K(H\mid G^*) \aeq \K(G,H) \aeq \K(H)+\K(G\mid H^*),
\label{eq:equiv_symmetry}
\end{equation}
and Assumption~2 removes the final term, leaving $\K(G)+\K(H\mid G^*)\aeq\K(H)$. Substituting this into Eq.~\eqref{eq:equiv_substituted} yields
\begin{equation}
\K(G)+\K(D_G\mid G^*) \aeq \K(H)+\K(D_H\mid H^*),
\nonumber
\end{equation}
which is exactly Eq.~\eqref{eq:equiv_condition}. The two representations are therefore informationally equivalent up to additive constants; Fig.~\ref{fig:G_H_comparison} shows the corresponding decomposition schematically.

\textit{Remark.}
Equation~\eqref{eq:equiv_condition} is the necessary-and-sufficient condition for informational equivalence; Assumptions~1 and~2 are interpretable sufficient conditions that jointly imply it. Thus the derivation should be read primarily as a bookkeeping characterization of one sufficient route to equivalence, not as a claim that arbitrary graph rules decompose in this way. A graph rule may achieve equivalence through mechanisms other than explicitly reconstructing $H$---the assumptions identify one natural route, not the only one. Moreover, both assumptions are non-trivial modeling hypotheses that may fail in most empirical network systems: Assumption~1 requires the graph rule to carry precisely the right amount of higher-order information, while Assumption~2 requires the graph to be recoverable from the hypergraph by a fixed, system-size-independent program. Equivalence is therefore the characterization of a specific regime, not a generic result.

\begin{figure}[h]
    \centering
    \includegraphics[width=0.5\linewidth]{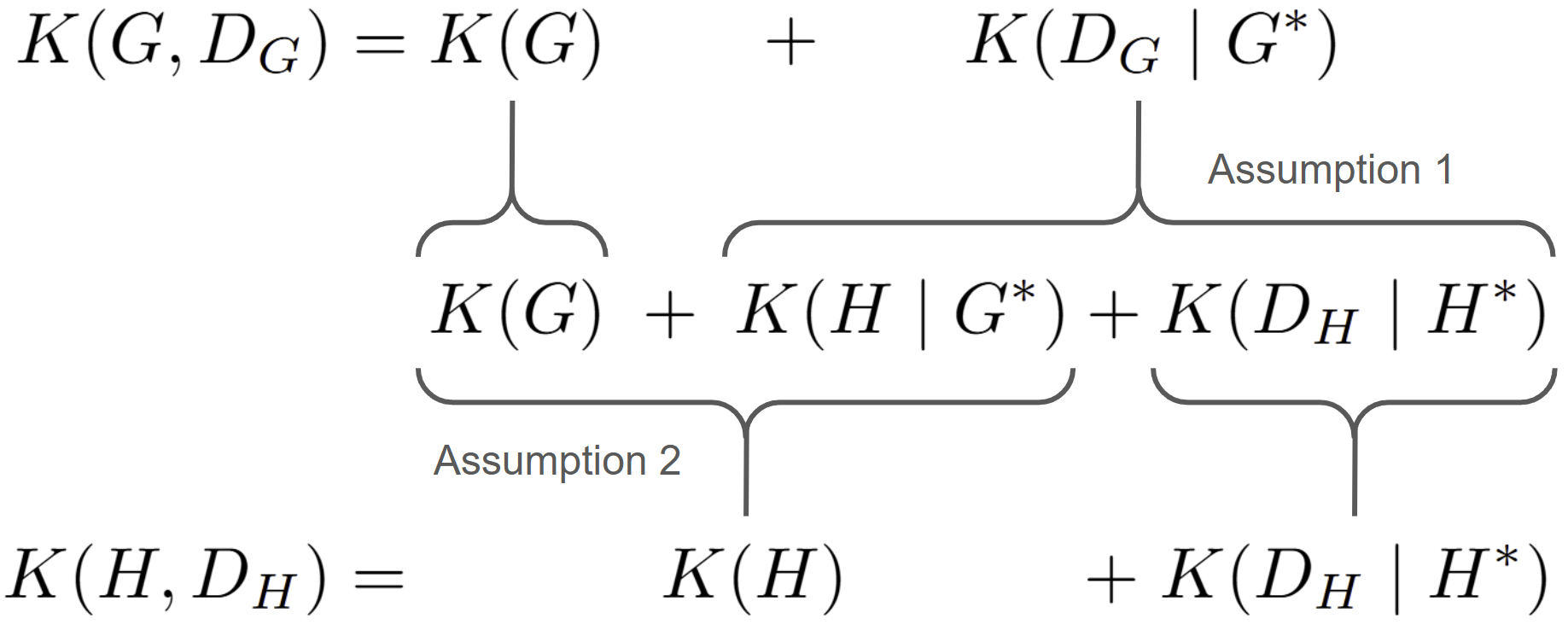}
    \caption{Schematic decomposition of the complexity contributions of the graph and hypergraph representations under Assumptions~1 and~2, which place them in the equivalence regime.}
    \label{fig:G_H_comparison}
\end{figure}

In this regime, informational considerations alone cannot justify a preference between the two descriptions. Any preference must instead rest on non-informational criteria, such as interpretability, mechanistic transparency, empirical observability, or modeling convenience. As an intuitive example, whenever the graph and hypergraph classes are related by a bijection computable by a single fixed program, they encode the same information up to a constant and Eq.~\eqref{eq:equiv_condition} holds. As noted in Section~\ref{sec:encoding}, the existence of a bijection is not by itself sufficient --- it must admit a fixed-length encoding --- but when it does, the two classes lie in the equivalence regime.

The result is conditional on both assumptions. Equation~\eqref{eq:equiv_assumption1} may fail if the higher-order structure compresses differently in the two representations, and Eq.~\eqref{eq:equiv_assumption2} fails when the graph cannot be recovered from the hypergraph by a fixed projection. Relaxing either assumption gives rise to the graph-preferred and hypergraph-preferred regimes considered next.

\subsubsection{Graph-preferred regime \texorpdfstring{($\Delta\RC\ale0$)}{∆RC<0}}
\label{sec:graph_pref}

This regime is obtained from the equivalence case by retaining Assumption~2 but breaking Assumption~1 in the graph's favour. Concretely, suppose the graph still arises as a fixed projection of the hypergraph,
\begin{equation}
\K(G\mid H^*)\aeq0,
\label{eq:graphpref_proj}
\end{equation}
but that the graph-based rule encodes the higher-order information \emph{strictly} more compactly than reconstructing and running the hypergraph,
\begin{equation}
\K(D_G\mid G^*)\ale\K(H\mid G^*)+\K(D_H\mid H^*),
\label{eq:graphpref_strict}
\end{equation}
that is, Eq.~\eqref{eq:equiv_assumption1} fails as an equality and holds only as a strict inequality by more than an additive constant. This is what we mean by breaking Assumption~1 in the graph's favour.

Expanding $\K(G,D_G)$ by the chain rule and applying Eq.~\eqref{eq:graphpref_strict},
\[
\K(G,D_G)\aeq\K(G)+\K(D_G\mid G^*)\ale\K(G)+\K(H\mid G^*)+\K(D_H\mid H^*).
\]
Symmetry of information~\eqref{eq:soi} gives $\K(G)+\K(H\mid G^*)\aeq\K(H)+\K(G\mid H^*)$, and the retained projection assumption~\eqref{eq:graphpref_proj} removes the final term, leaving $\K(G)+\K(H\mid G^*)\aeq\K(H)$. Hence
\[
\K(G,D_G)\ale\K(H)+\K(D_H\mid H^*)\aeq\K(H,D_H),
\]
so $\Delta\RC\ale0$ and the graph representation provides the shorter description.

The interpretation is that the higher-order information required to reproduce the dynamics is captured more compactly inside the graph-based rule than through an explicit hyperedge list together with its own rule. The explicit groups in $H$ then buy no compensating reduction in dynamical cost, and the graph language functions as a shorter generative description of the same dynamical map.

\subsubsection{Hypergraph-preferred regime \texorpdfstring{($\Delta\RC\age0$)}{∆RC>0}}
\label{sec:hyper_pref}

The previous two regimes retained Assumption~2, that the graph is a fixed projection of the hypergraph. The hypergraph-preferred regime is the one in which this assumption \emph{fails}: the operative higher-order groups are not recoverable from the pairwise skeleton by a fixed program, so
\begin{equation}
\K(G\mid H^*)\age0.
\label{eq:hyperpref_nonproj}
\end{equation}
Suppose, in addition, that the graph rule is at least as costly as reconstructing and running the hypergraph,
\begin{equation}
\K(D_G\mid G^*)\ageq\K(H\mid G^*)+\K(D_H\mid H^*).
\label{eq:hyperpref_rule}
\end{equation}
Expanding both joint complexities by the chain rule and applying Eq.~\eqref{eq:hyperpref_rule},
\[
\K(G,D_G)\aeq\K(G)+\K(D_G\mid G^*)\ageq\K(G)+\K(H\mid G^*)+\K(D_H\mid H^*).
\]
Symmetry of information~\eqref{eq:soi} gives $\K(G)+\K(H\mid G^*)\aeq\K(H)+\K(G\mid H^*)$, so
\[
\K(G,D_G)\ageq\K(H)+\K(G\mid H^*)+\K(D_H\mid H^*)\aeq\K(H,D_H)+\K(G\mid H^*).
\]
Hence
\begin{equation}
\Delta\RC\;\ageq\;\K(G\mid H^*)\;\age\;0,
\label{eq:hyperpref_delta}
\end{equation}
and the hypergraph model provides the shorter description, by a margin of at least the irreducible cost $\K(G\mid H^*)$ of recovering the skeleton from the hypergraph.

The interpretation is the mirror image of the graph-preferred case. There, the skeleton determined the groups, and the graph rule absorbed the higher-order information cheaply. Here, neither holds: the groups are genuine information not present in $G$, so a graph model must encode them explicitly inside its dynamical rule $D_G$. This forces a redundancy between $G$ and $D_G$ --- part of $D_G$ implicitly reconstructs the neighborhood structure already named in $G$ --- which inflates the joint description $(G,D_G)$ relative to the hypergraph, in which the groups are stated once as primitive objects. The surplus is exactly the non-projectable information measured by $\K(G\mid H^*)$.

\textit{Remark.} The condition $\K(G\mid H^*)\age0$ together with Eq.~\eqref{eq:hyperpref_rule} constitutes one sufficient mechanism by which a hypergraph representation can be shorter. A second, distinct mechanism arises when $G$ \emph{is} computable from $H$ (so that $\K(G\mid H^*)\aeq0$), but $H$ is \emph{not} computable from $G$: the graph rule must then re-encode the lost grouping information, inflating $\K(D_G\mid G^*)$ without any compensating reduction in $\K(G)$. This second mechanism is the operative one in Examples~\ref{ex:hyper_pref} and~\ref{ex:hyper_pref_xor}, where the clique expansion makes $G$ recoverable from $H$ but destroys the inverse recovery.

\subsubsection{Mixed regime}
\label{sec:mixed}

The remaining possibility combines a graph-favouring dynamical term with a hypergraph-favoring structural term: Assumption~1 tips toward the graph,
\[
\K(D_G\mid G^*)\ale\K(H\mid G^*)+\K(D_H\mid H^*),
\]
while Assumption~2 fails, so the skeleton is not recoverable from the hypergraph,
\[
\K(G\mid H^*)\age0.
\]
Carrying out the same chain-rule and symmetry-of-information steps as above, the relative cost decomposes as
\begin{equation}
\Delta\RC \;\aeq\; \underbrace{\bigl[\K(D_G\mid G^*)-\K(H\mid G^*)-\K(D_H\mid H^*)\bigr]}_{\ale\,0} \;+\; \underbrace{\K(G\mid H^*)}_{\age\,0}.
\label{eq:mixed_decomp}
\end{equation}
The two assumptions are exactly the signs of the two terms: a cheaper graph rule pushes $\Delta\RC$ down, while the non-recoverable skeleton pushes it up. Their sum is therefore not signed a priori, and the preferred representation is determined by which effect dominates for the particular system at hand --- the dynamical saving of the graph rule versus the structural information $\K(G\mid H^*)$ that the graph cannot inherit from the hypergraph.

This mixed regime is likely to be common in practice. A graph may support a simple aggregate or mean-field rule that avoids naming every group explicitly, while a hypergraph may encode observed groups that do not determine all pairwise or contextual information used by the graph. The graph is then favored if the dynamical compression in $D_G$ is larger than the structural residual $\K(G\mid H^*)$; the hypergraph is favored if the residual structural information dominates. Thus, the mixed case should be viewed as a quantitative trade-off rather than as a separate qualitative category.

\section{Approximating Representational Complexity}
\label{sec:approx}

The comparison developed above is exact but not operational: Kolmogorov complexity is incomputable, and the ideal representations $R_G^{\mathrm{ideal}}$, $R_H^{\mathrm{ideal}}$ that attain the minima in Eq.~\eqref{eq:restricted_R_ideal} cannot be exhibited in practice. To build intuition for how the regimes of Section~\ref{sec:comparison} can arise, we therefore turn from Kolmogorov complexity to concrete \emph{description-length} estimates.

For a structural language $\mathcal{L}$ and a dynamical class $\mathcal{C}$, let $\Len_\mathcal{L}(S)$ and $\Len_\mathcal{C}(D\mid S)$ denote the number of bits a fixed, explicit coding scheme spends on the structure $S$ and on the rule $D$ given $S$. Because any computable code is a valid description, every admissible pair gives an upper bound on the complexity of the ideal representation,
\begin{equation}
  \K(R^{\mathrm{ideal}})
  \;\aleq\;
  \min_{\substack{S\in\mathcal{L},\,D\in\mathcal{C}\\D(S)=F}}
  \bigl[\Len_\mathcal{L}(S)+\Len_\mathcal{C}(D\mid S)\bigr].
  \label{eq:L_upper_bound}
\end{equation}
In the examples below, we do not attempt this minimization. Instead, we fix one natural, explicitly constructed representation in each language --- not necessarily the ideal one --- and estimate its description length $L(\cdot)$ by a simple counting argument. Comparing the two estimates yields a proxy
\begin{equation}
  \Delta L := L(G,D_G)-L(H,D_H)
  \nonumber
\end{equation}
for the relative cost $\Delta\RC$, whose sign we read using the same three regimes as in Section~\ref{sec:comparison}: $\Delta L<0$ favors the graph, $\Delta L\approx0$ indicates parity, and $\Delta L>0$ favors the hypergraph, for the specific representations under comparison.

Two caveats must be kept in view, since they separate these examples from the theory above. First, $L$ is an upper-bound estimate for a \emph{chosen} encoding, not the Kolmogorov complexity of an ideal representation; $\Delta L$ is therefore a heuristic proxy for $\Delta\RC$, not an evaluation of it. Second, and consequently, each example compares two \emph{particular} representations and shows only which of those two is cheaper under the chosen codes. It does \emph{not} establish that graph or hypergraph modeling is preferable in general, nor that a graph-preferred or hypergraph-preferred regime exists at the level of Kolmogorov complexity. The examples illustrate circumstances under which a preference can appear in practice; they are not theorems about the formalism. 

Moreover, the sign of $\Delta L$ can depend on the chosen code. For example, a hypergraph language that explicitly lists all groups may be more expensive than a graph language, whereas a hypergraph language that admits the generating rule ``all $k$-subsets of each neighborhood'' as a short code may collapse the same example into the equivalence regime. Robust placement of a system in one regime therefore requires a modeling justification for the admissible codes or a compression procedure showing that the sign persists across reasonable encodings.
We present each example compactly, stating only the system, the two encodings, and the resulting $\Delta L$; the full counting derivations are collected in Appendix~\ref{app:examples}.

\subsection{Example I: generative groups (graph-preferred)}
\label{ex:graph_pref_subsets}

Consider a $d$-regular graph $G$ with a neighborhood-threshold rule that activates node $i$ when at least $\theta$ of its neighbors are active. A hypergraph representation of the same dynamics could, in principle, list all $k$-subsets of each neighborhood as operative groups ($1\le\theta\le k\le d$) and apply a group-threshold rule over them. As shown in Appendix~\ref{app:graph_pref_subsets}, the two rules are dynamically identical: the all-subsets group-threshold and the plain neighborhood-threshold produce the same map $F$. The graph therefore costs $L(G,D_G)=Nd\log N$, while the hypergraph, if its language requires an explicit edge list, must enumerate $\binom{d}{k}$ subsets of $k$ labels per node, giving $L(H,D_H)=N\binom{d}{k}\,k\log N$. Since the groups are fully generated from $G$ by a fixed rule, listing them explicitly is pure overhead, and $\Delta L = Nd\log N - N\binom{d}{k}\,k\log N < 0$ for $k\ge2$: the graph is the shorter description.

\subsection{Example II: irrelevant hyperedges (graph-preferred)}
\label{ex:graph_pref_independent}

Consider an SI-type spreading process $x_i(t+1)=\bigvee_{j\in A_i}x_j(t)$ on a graph $G$, alongside a hypergraph $H$ of $M$ size-$k$ edges drawn from an independent source --- say, external metadata such as group memberships or co-occurrence records --- that carries no information about the adjacency sets $A_i$ driving the dynamics. The graph stores the adjacency and names the OR rule at cost $L(G,D_G)=\sum_{i}|A_i|\log N$. The hypergraph must pay for its edge list ($Mk\log N$) and then, because $H$ tells it nothing about the adjacency, must re-encode the neighbor sets in full inside its rule, giving $L(H,D_H)=Mk\log N+\sum_{i}|A_i|\log N$. The two adjacency costs cancel in the comparison, leaving $\Delta L = -Mk\log N < 0$: the hyperedges contribute nothing to the dynamics and the graph is cheaper by exactly the cost of carrying them.

\subsection{Example III: pairwise structure (equivalence)}
\label{ex:equiv}

When the dynamics depends only on pairwise relations, a hypergraph representation that simply wraps each edge of $G$ in a two-element hyperedge is informationally identical to the graph. Both structures name the same $|E|$ pairs with the same two labels per pair, so $L(G,D_G)=L(H,D_H)=2|E|\log N$ and $\Delta L=0$. The two representations are mutually interconvertible by a fixed rule, so neither carries information the other lacks, and no informational argument can prefer one over the other. The same conclusion holds whenever the operative groups are reduced to singletons or pairs recoverable from $G$, as for standard pairwise contagion or linear threshold dynamics: using hypergraph language, there is a harmless relabelling, not a modeling commitment with an informational price tag.

\subsection{Example IV: hidden groups (hypergraph-preferred)}
\label{ex:hyper_pref}

Consider a $d$-regular graph $G$ in which each node $i$ has a single true interaction group $A_i$ --- one specific $k$-subset of its $d$ neighbors --- and activates by a threshold over that group alone. The hypergraph $H=\{A_i\}_{i=1}^N$ names these groups directly, so its cost is simply $L(H,D_H)=Nk\log N$ for the edge list plus the named rule. The graph, by contrast, records all $d$ neighbors of each node at cost $Nd\log N$, but the skeleton gives no indication of which $k$-subset is operative; there are $\binom{d}{k}$ candidates per node and the rule must single one out, adding $N\log\binom{d}{k}$ bits. The total graph cost is therefore $L(G,D_G)=Nd\log N+N\log\binom{d}{k}$, giving $\Delta L = N(d-k)\log N + N\log\binom{d}{k} > 0$: the hypergraph wins because it names the operative groups as primitives, while the graph pays both for redundant neighborhood information and for the group-selection the skeleton cannot express. The full counting argument is in Appendix~\ref{app:hyper_pref}.

\subsection{Example V: clique ambiguity (hypergraph-preferred)}
\label{ex:hyper_pref_xor}

Consider $N$ nodes interacting through $M$ random $k$-hyperedges ($k\ge3$) under the XOR--AND rule $x_i(t+1)=\bigvee_{\ell:\,i\in e_\ell}\bigwedge_{j\in e_\ell\setminus\{i\}}x_j(t)$, which activates a node when some group it belongs to is unanimously active. The natural graph encoding is the clique expansion, replacing each hyperedge with its $\binom{k}{2}$ pairwise edges at structural cost $Mk(k-1)\log N$. Because the clique expansion does not uniquely determine the original grouping~\cite{LaRockLambiotte2025}, the rule cannot recover the groups from the adjacency and must re-encode them explicitly, adding another $Mk\log N$ bits; the total is $L(G,D_G)=Mk^2\log N$. The hypergraph simply lists the $M$ groups and names the rule at $L(H,D_H)=Mk\log N$. The comparison gives $\Delta L = Mk(k-1)\log N > 0$: by projecting groups onto cliques, the graph destroys the grouping information and is then forced to re-encode it in the rule, paying for both the expanded adjacency and the reconstructed groups and making the clique edges redundant given what the rule already carries. This ambiguity is system-dependent: special hypergraph families whose groups are uniquely recoverable from their pairwise projections would not fall into this example.

\subsection{Summary}
Across these examples, the sign of $\Delta L$ tracks a single question: whether the higher-order structure is compressible into pairwise form under the chosen codes. The graph model is cheaper when the skeleton is a short generative code for the operative groups (Examples~\ref{ex:graph_pref_subsets},~\ref{ex:graph_pref_independent}); the hypergraph model is cheaper when the groups are residual information the skeleton cannot supply (Examples~\ref{ex:hyper_pref},~\ref{ex:hyper_pref_xor}); and the two coincide when each is recoverable from the other (Example~\ref{ex:equiv}). This mirrors the exact theory of Section~\ref{sec:comparison}: in the unrestricted setting, any missing structure can be absorbed into the rule, but once the structural and dynamical codes are fixed, the comparison becomes a genuine compression problem. We stress again that these are comparisons of particular encodings, not evaluations of $\Delta\RC$, and that a general criterion for which regime a given system and model class fall into remains open.

\section{Discussion}

We have introduced Representational Complexity (RC) as an information-theoretic measure of the excess description length incurred when a dynamical map is represented within a chosen structural and dynamical language. Specifically, RC compares the optimal description length achievable in a restricted model class with the Kolmogorov complexity $\K(F)$ of the induced dynamical map, which acts as a representation-independent floor. The framework leads to three main conclusions.

First, the equation-based representation establishes the theoretical floor. The quantity $\K(F)$ is the Kolmogorov complexity of the dynamical map itself, independent of any particular decomposition into structure and dynamics. No exact representation can describe the map in fewer bits. Explicit structure is therefore not justified because it lowers the intrinsic complexity of the map, but because it provides mechanistic organization, interpretability, empirical observability, or efficient description within a restricted modeling class~\cite{Torres2021,BickGross2023}. In this sense, structure is not free: it is a modeling commitment whose informational cost must be balanced against its scientific utility.

Second, unrestricted expressiveness is not a sufficient criterion for choosing a structural language. If the dynamical rule is allowed to be an arbitrary computable object, then missing structural information can always be absorbed into the dynamics. A graph representation can encode missing hyperedge information inside the rule, whereas a hypergraph representation can encode the same information structurally. These descriptions have equal total Kolmogorov complexity up to additive constants while differing only in how information is distributed between structure and dynamics. Thus, unrestricted equivalence does not imply modeling equivalence.

Third, restricting the admissible structures and dynamics makes the choice of structural language meaningful. Once these are constrained, different languages may incur different description lengths. We showed that graph and hypergraph representations are informationally equivalent only conditionally --- specifically, when the higher-order structure can be reconstructed from the graph through its dynamical rule, and the graph can in turn be recovered from the hypergraph by a fixed projection. When both conditions hold, neither representation enjoys a compressive advantage, and any preference between them must rest on non-informational grounds such as interpretability, mechanistic transparency, empirical observability, or modeling convenience. A concrete instance is provided by \cite{Lambiotte2019}, who show that higher-order Markov models capture temporal path dependencies that pairwise projections destroy; the preference for the higher-order representation there is grounded in dynamical fidelity rather than description length.

Within the restricted setting, we also showed how graph-preferred and hypergraph-preferred regimes arise once the equivalence conditions are relaxed. At the exact Kolmogorov level, Section~\ref{sec:comparison} gives conditional mechanisms, established through the chain rule and symmetry of information, under which the relative cost $\Delta\RC$ can favor one representation over the other. Section~\ref{sec:approx} then provides operational examples, based on explicit codes, that illustrate analogous effects in computable description-length estimates. The exact results identify the regimes in principle; the operational examples build intuition for how they can arise in practice. Kolmogorov complexity is incomputable, and the ideal representations attaining the minima in Eq.~\eqref{eq:restricted_R_ideal} cannot be constructed in practice, which is why the operational examples are indispensable as a complement to the formal theory.

The broader implication is that structural languages should be understood as modeling commitments rather than merely expressive devices. A graph-based model with sufficiently flexible node dynamics may reproduce higher-order behavior, but the higher-order structure is then encoded implicitly within the dynamical rule, whereas a hypergraph representation treats it as an explicit part of the model architecture. This distinction affects inference, parameterization, generalization, interpretability, and the relationship between model assumptions and empirical observations.

The distinction is particularly pronounced in empirical systems where interactions are observed directly as joint events---such as group conversations, co-authorships, biochemical complexes, ecological assemblages, or neural co-activations. In such cases, a hypergraph need not be viewed as an embellishment of an underlying graph, but rather as a direct representation of the observed interaction units. The relevant question is therefore not whether a sufficiently expressive graph-based model can reproduce the same dynamics, but whether doing so preserves the mechanistic and statistical structure of interest, rather than merely transferring that structure into hidden dynamical complexity. A concrete illustration is provided by \cite{Llabres2026}, who show that social impact models on hypergraphs admit an exact microscopic reduction to pairwise dynamics on a weighted projected network; in the nonlinear regime, the resulting weights are state-dependent, making the higher-order structure implicit in a dynamically complex rule---precisely the kind of informational transfer from structure to dynamics whose cost $(\Delta\RC)$ is designed to measure. These systems are plausible candidates for hypergraph-preferred descriptions, but establishing that claim empirically would require the computable surrogates discussed below.

Our framework does not imply that hypergraphs are universally superior to graphs, nor the converse. It provides a language for reasoning about the informational consequences of modeling choices under explicit restrictions, and it locates any representational advantage in the interaction between the system being modelled and the assumptions being imposed, rather than in a universal hierarchy of expressive power. A complementary perspective is provided by Lacasa~\cite{Lacasa2026}, who shows via Carleman linearization that nonlinear graph-based dynamics can be represented as linear dynamics on a richer combinatorial structure, recovering a structural benefit for higher-order representations that is orthogonal to expressiveness.
This also clarifies the relationship with Peixoto et al.~\cite{Peixoto2026}: their expressiveness claim is compatible with the unrestricted floor established here, since graphs can encode any finite dynamics once the rule is free to absorb arbitrary information. Our aim is therefore not to refute the unrestricted expressiveness claim, but to relocate the comparison to the restricted modeling setting in which the cost of absorbing information into the rule becomes part of the question. In our language, a graph model with sufficiently flexible learned node dynamics corresponds to a representation in which the graph rule absorbs information that a hypergraph model may instead place explicitly in the structure. If that information is cheaply encoded by the rule, the graph can be preferred; if encoding it inside the rule creates a high residual cost, the hypergraph can be preferred. Our contribution concerns the complementary question of what happens once restrictions motivated by scientific modeling are imposed, where informational cost becomes relevant, and the choice of language can matter. The purpose of $\RC$ is not to crown a universally preferred representation but to formalize how modeling constraints distribute description length.

The framework is also closely related to algorithmic statistics and the Kolmogorov structure-function program. There, one studies two-part descriptions that split the information in a data object into a model, or statistic, and a residual description of the data given the model. Our construction has the same algorithmic-information-theoretic spirit, but the split is specialized to networked dynamical systems: the two parts are an explicit interaction skeleton $S$ and a computable dynamical rule $D$, and the comparison is made between constrained structural languages such as graphs and hypergraphs. Thus, the novelty is not a new general theorem in algorithmic statistics, but the use of this decomposition to formulate a structural-language comparison for dynamical systems. The notions of interpretability, mechanistic transparency, and identifiability invoked above should be understood in this modeling sense; formalizing them as additional code-length or statistical-identifiability penalties would be a natural extension.

Finally, the exact Kolmogorov quantities studied here are not computable, so empirical application will require computable surrogates --- minimum-description-length (MDL) estimators or lossless compressors applied to explicit graph-, hypergraph-, and rule-based encodings. The connection with MDL is direct but not identical. Kolmogorov complexity provides the representation-independent theoretical baseline $\K(F)$ and the ideal restricted costs in Eq.~\eqref{eq:restricted_R_ideal}; MDL replaces these uncomputable objects with explicit code families and compares the resulting two-part descriptions. In that sense, the operational quantity $L(S,D)=L_{\mathcal{L}}(S)+L_{\mathcal{C}}(D\mid S)$ is an MDL-style surrogate for the restricted representation length.

This suggests a possible empirical program. One would specify admissible graph and hypergraph code families, including both structural codes and rule codes, fit or select the best representation within each family, and then compare their total two-part description lengths. Such a procedure would not compute $\RC$ exactly, but it would provide a reproducible test of whether the graph or hypergraph language yields a shorter description under the declared modeling assumptions. The counting estimates of Section~\ref{sec:approx} are a first, deliberately simple step in this direction; richer estimators would replace our fixed codes with genuine compression or MDL model-selection criteria. If graph-preferred or hypergraph-preferred regimes exist, such estimators should reveal corresponding differences in compressibility under appropriate modeling constraints. Developing them and applying them to empirical systems is an important direction for future work.

\section{Acknowledgements}
This work was supported by the European Union’s Horizon Europe Marie Sklodowska-Curie Actions
under the “BeyondTheEdge: Higher-Order Networks and Dynamics” project (Grant Agreement
No. 101120085). G.F.A. was supported by the \textit{Fundação de Amparo à Pesquisa do Estado de São Paulo} (FAPESP), Process Number 2024/16711-8 and 2025/04409-8. Y.M. was also partially supported by the Government of Arag\'on, Spain, and ERDF ``A way of making Europe'' through grant E36-23R (FENOL), and by Grant No. PID2023-149409NB-I00 from Ministerio de Ciencia, Innovaci\'on y Universidades, Agencia Espa\~nola de Investigaci\'on (MICIU/AEI/10.13039/501100011033) and ERDF ``A way of making Europe''.

\bibliographystyle{unsrt}
\bibliography{references}


\appendix

\section{Description-length derivations for the examples}
\label{app:examples}

This appendix gives the detailed counting arguments behind the description-length estimates of Section~\ref{sec:approx}. Throughout, we work with a single fixed coding scheme and report exact bit counts, so all relations are ordinary equalities and inequalities; no additive-constant relations and no Kolmogorov-complexity quantities appear. We emphasize once more that each $L(\cdot)$ is the length of one explicitly chosen, not necessarily optimal, encoding, so the resulting $\Delta L$ compares two particular representations and is only a heuristic proxy for $\Delta\RC$.

\paragraph*{Encoding conventions.}
We adopt the following uniform conventions, with all lower-order contributions (length prefixes, ceilings on $\log N$, parameter headers) suppressed.
\begin{itemize}
\item A single node label costs $\log N$ bits, where $\log\equiv\log_2$ and $N$ is the number of nodes.
\item An adjacency list specifying $|A_i|$ neighbours of node $i$ costs $|A_i|\log N$ bits; summed over nodes, a graph costs $\bigl(\sum_i|A_i|\bigr)\log N$ bits, i.e.\ $\log N$ per node--neighbour incidence.
\item A hyperedge on $k$ nodes costs $k\log N$ bits; a hypergraph of $M$ such edges costs $Mk\log N$ bits.
\item A \emph{named} dynamical rule --- one drawn from a fixed, system-independent catalogue (``OR over neighbors'', ``threshold $\geq\theta$'', and so on) --- costs a fixed number of bits independent of $N$, which we absorb into zero at the level of the leading-order counts: $L(D\mid S)=0$ for a named rule. This convention is not meant to make the rule literally costless; it only suppresses an $O(1)$ catalogue index that cannot affect the sign of the extensive examples. A rule that must additionally specify system-dependent data is charged for that data explicitly.
\end{itemize}
The relative cost reported in each example is $\Delta L = L(G,D_G)-L(H,D_H)$.

\subsection{A graph preferred case: example~\texorpdfstring{\ref{ex:graph_pref_subsets}}{IV 1}}
\label{app:graph_pref_subsets}

Let $G$ be $d$-regular with neighborhoods $U_i$ of size $d$, and fix $1\le\theta\le k\le d$. The hypergraph assigns to each node the family $H_i=\{A\subseteq U_i:|A|=k\}$ of all $k$-subsets of its neighborhood, and the dynamics is the group-threshold rule
\[
x_i(t+1)=\mathbf{1}\!\Bigl[\exists\,A\in H_i:\ \textstyle\sum_{j\in A}x_j(t)\ge\theta\Bigr].
\]
We first show this rule coincides with the plain neighborhood-threshold rule $x_i(t+1)=\mathbf{1}[\sum_{j\in U_i}x_j(t)\ge\theta]$. If $\sum_{j\in U_i}x_j(t)\ge\theta$, then $U_i$ contains $a\ge\theta$ active nodes; since $\theta\le k\le d$, we may choose a $k$-subset $A\subseteq U_i$ containing $\min(a,k)\ge\theta$ of them, so some $A\in H_i$ meets the threshold. Conversely, if some $A\in H_i$ satisfies $\sum_{j\in A}x_j(t)\ge\theta$, then $\sum_{j\in U_i}x_j(t)\ge\sum_{j\in A}x_j(t)\ge\theta$ because $A\subseteq U_i$. The two rules therefore produce the same map $F$.

\emph{Graph cost.} The structure is the $d$-regular adjacency, with $Nd$ incidences, hence $L(G)=Nd\log N$. The dynamics is the named neighborhood-threshold rule, so $L(D_G\mid G)=0$ and
\[
L(G,D_G)=Nd\log N.
\]

\emph{Hypergraph cost.} In a language that lists hyperedges explicitly, each node contributes $\binom{d}{k}$ subsets of $k$ labels, costing $\binom{d}{k}k\log N$ bits per node. The dynamics is the named group-threshold rule, so $L(D_H\mid H)=0$ and
\[
L(H,D_H)=N\binom{d}{k}\,k\log N.
\]

\emph{Comparison.}
\[
\Delta L = Nd\log N - N\binom{d}{k}\,k\log N
= N\log N\Bigl[d-\binom{d}{k}k\Bigr].
\]
For $k=1$ the bracket vanishes ($\binom{d}{1}\cdot1=d$); for $k\ge2$ one has $\binom{d}{k}k>d$ whenever $d>1$, so $\Delta L<0$. The all-subsets family is generated from $G$ by a fixed rule and adds nothing dynamically, so listing it explicitly is pure structural overhead and the graph is the shorter description. In practice this is the regime of higher-order groups that are real but dynamically inert: the groups exist and are even derivable from the pairwise structure, yet the dynamics responds only to their aggregate effect --- here the total number of active neighbors --- as in a quorum or threshold process where the outcome depends on how many contacts are active rather than on which particular group activates it.

\subsection{A graph preferred case: example~\texorpdfstring{\ref{ex:graph_pref_independent}}{IV 2}}
\label{app:graph_pref_independent}

Let $N$ binary nodes evolve by the SI-type rule $x_i(t+1)=\bigvee_{j\in A_i}x_j(t)$ on a graph $G$ with adjacency sets $A_i$, and let a hypergraph $H$ of $M$ size-$k$ edges be supplied independently of $G$ (for instance as external metadata), so that $H$ carries no information about the sets $A_i$.

\emph{Graph cost.} The structure is the adjacency, with $\sum_i|A_i|$ incidences, hence $L(G)=\sum_i|A_i|\log N$. The dynamics is the named OR rule, so
\[
L(G,D_G)=\sum_{i=1}^{N}|A_i|\log N.
\]

\emph{Hypergraph cost.} The edge list costs $L(H)=Mk\log N$. Because $H$ is independent of the adjacency, it provides no information about the sets $A_i$ that govern the dynamics; the rule must therefore encode those sets in full, at the same incidence cost as the graph structure,
\[
L(D_H\mid H)=\sum_{i=1}^{N}|A_i|\log N.
\]
Hence
\[
L(H,D_H)=Mk\log N+\sum_{i=1}^{N}|A_i|\log N.
\]

\emph{Comparison.}
\[
\Delta L = \sum_i|A_i|\log N - \Bigl(Mk\log N+\sum_i|A_i|\log N\Bigr) = -Mk\log N < 0.
\]
The hyperedges are dead weight: they neither compress the adjacency nor inform the rule, and the hypergraph representation simply carries the adjacency cost of the graph plus the gratuitous edge list. This is the common situation in which higher-order groupings come from a separate data source --- shared affiliations, co-authorship, or co-purchase records, say --- and are overlaid on a process that in fact spreads only along pairwise contacts; the annotations are genuine measurements but the wrong channel for this dynamics, so they cost bits to store while the rule must still recover the pairwise structure that actually drives the spreading. The same annotations would, of course, be indispensable were the contagion genuinely group-based, as in the hypergraph-preferred examples below.

\subsection{An equivalence case: example~\texorpdfstring{\ref{ex:equiv}}{IV3}}
\label{app:equiv}

Let the dynamics depend only on pairwise relations carried by the edges of $G$, with $|E|$ edges, and let $H$ be the size-two hypergraph whose hyperedges are exactly those pairs.

\emph{Graph cost.} The edge list costs $L(G)=2|E|\log N$ (two labels per edge), equivalently $\sum_i|A_i|\log N=2|E|\log N$ since each edge contributes two incidences. The dynamics is a named pairwise rule, so
\[
L(G,D_G)=2|E|\log N.
\]

\emph{Hypergraph cost.} The hypergraph lists $|E|$ size-two edges, again two labels each, so $L(H)=2|E|\log N$; the dynamics is the same named pairwise rule, giving
\[
L(H,D_H)=2|E|\log N.
\]

\emph{Comparison.}
\[
\Delta L = 2|E|\log N - 2|E|\log N = 0.
\]
The two structures are interconvertible by a fixed rule --- wrap each edge in a two-element hyperedge, or extract the pair from each hyperedge --- so the higher-order language stores nothing beyond the skeleton and neither representation has an advantage. The same conclusion holds whenever the operative groups reduce to singletons or pairs recoverable from $G$, which is what one finds for genuinely pairwise mechanisms such as linear threshold or standard contagion dynamics, where casting the model in hypergraph language is a harmless relabelling that changes neither the description length nor the mechanism.

\subsection{A hypergraph preferred case: example~\texorpdfstring{\ref{ex:hyper_pref}}{IV 4}}
\label{app:hyper_pref}

Let $G$ be $d$-regular, and suppose each node $i$ has a single true interaction group $A_i\subseteq U_i$ of size $k$, with threshold dynamics $x_i(t+1)=\mathbf{1}[\sum_{j\in A_i}x_j(t)\ge\theta]$ over that group. The hypergraph of operative groups is $H=\{A_i\}_{i=1}^N$.

\emph{Hypergraph cost.} The structure lists $N$ hyperedges of size $k$, so $L(H)=Nk\log N$. The dynamics is the named threshold rule, hence
\[
L(H,D_H)=Nk\log N.
\]

\emph{Graph cost.} The structure is the $d$-regular adjacency, $L(G)=Nd\log N$. The skeleton specifies each neighborhood $U_i$ but not which of its $k$-subsets is the active group $A_i$; there are $\binom{d}{k}$ candidates per node. A graph-based rule that cannot treat the group assignment as structural must therefore encode, for each node, which candidate is active, at $\log\binom{d}{k}$ bits:
\[
L(D_G\mid G)=N\log\binom{d}{k}.
\]
Hence
\[
L(G,D_G)=Nd\log N+N\log\binom{d}{k}.
\]

\emph{Comparison.}
\[
\Delta L = \Bigl(Nd\log N+N\log\tbinom{d}{k}\Bigr)-Nk\log N
= N(d-k)\log N + N\log\binom{d}{k} > 0,
\]
since $d\ge k$ and $\binom{d}{k}\ge1$. The group assignment is information the skeleton cannot supply, so the graph pays an extensive penalty --- the per-node group-selection cost $N\log\binom{d}{k}$ together with the difference $N(d-k)\log N$ between storing the full neighborhoods and storing only the operative groups --- and the hypergraph is the shorter description. This is the regime one meets when the operative groups are genuine latent structure not implied by the contact network, such as a specific reacting complex among a molecule's many potential partners or a particular coalition among a person's contacts, so that naming the groups directly is far cheaper than recording the full neighborhoods and then singling out the active subset within each.

\subsection{A hypergraph preferred case: example~\texorpdfstring{\ref{ex:hyper_pref_xor}}{IV 5}}
\label{app:hyper_pref_xor}

Let $N$ nodes interact through $M$ random $k$-hyperedges with $k\ge3$, under the XOR--AND rule
\[
x_i(t+1)=\bigvee_{\ell:\,i\in e_\ell}\ \bigwedge_{j\in e_\ell\setminus\{i\}}x_j(t).
\]

\emph{Hypergraph cost.} The structure lists $M$ size-$k$ edges, so $L(H)=Mk\log N$. The dynamics is the named rule (conjunction within each incident hyperedge, then disjunction across them), giving
\[
L(H,D_H)=Mk\log N.
\]

\emph{Graph cost.} The clique expansion replaces each hyperedge by its $\binom{k}{2}=\tfrac12 k(k-1)$ pairwise edges, for $M\binom{k}{2}$ edges in total, costing two labels each:
\[
L(G)=M\binom{k}{2}\cdot2\log N = Mk(k-1)\log N.
\]
The clique expansion does not determine the original grouping: distinct hyperedge families can induce the same pairwise adjacency~\cite{LaRockLambiotte2025}, so the sets $e_\ell$ are not recoverable from $G$ alone. A graph-based rule that cannot access latent hyperedge labels must therefore re-encode the grouping it needs, namely the $M$ size-$k$ edges, at
\[
L(D_G\mid G)=Mk\log N.
\]
Hence
\[
L(G,D_G)=Mk(k-1)\log N + Mk\log N = Mk^2\log N.
\]

\emph{Comparison.}
\[
\Delta L = Mk^2\log N - Mk\log N = Mk(k-1)\log N > 0 \quad (k\ge3).
\]
For $k=2$ the hyperedges are ordinary edges and the clique-expansion ambiguity disappears; the example is therefore genuinely higher-order only for $k\ge3$. The graph must store the pairwise neighborhoods and the operative groups separately. The pairwise edges are then redundant given the re-encoded groups, and this duplication makes $(G,D_G)$ the longer description. This is what happens whenever a genuinely many-body interaction is forced into pairwise form: the clique expansion of a group conversation, a joint biochemical reaction, or a co-authored work loses track of which units acted together, so the graph model must pay twice --- once for the projected edges and again to reconstruct the groups they came from.

\end{document}